\documentclass[conference,letterpaper]{IEEEtran}

\usepackage[utf8]{inputenc} 
\usepackage[T1]{fontenc}
\usepackage{url}
\usepackage{ifthen}
\usepackage{cite}
\usepackage{braket}
\usepackage{amssymb}
\usepackage{amsthm}
\usepackage{graphicx}
\usepackage{dsfont}
\usepackage{comment}
\usepackage{tikz}
\usepackage[cmex10]{amsmath} 
\usepackage{hyperref}
\newtheorem{theorem}{Theorem}
\newtheorem{definition}{Definition}

\newtheorem{lemma}{Lemma}

\newtheorem{corollary}{Corollary}

\def\a{\mathbf{a}}
\def\b{\mathbf{b}}
\def\u{\mathbf{u}}

\def\U{\mathbf{U}}

\def\F{\mathbb{F}}

\def\cA{\mathcal{A}}
\def\cC{\mathcal{C}}

\def\rmG{\mathrm{G}}

\begin{document}
\title{Lifting an $[n,k,d]$ Binary Linear Code to an $[[n,k,d]]$ Quantum Stabilizer Code}


\author{%
  \IEEEauthorblockN{Thamadathil Aswanth and Navin Kashyap}
  \IEEEauthorblockA{Dept. Electrical Communication Engineering, Indian Institute of Science\\
                    \{aswantht, nkashyap\}@iisc.ac.in}
}


\maketitle

\begin{abstract}
    The codeword stabilized (CWS) framework provides a systematic method for constructing quantum error correcting codes from a simple undirected graph and a classical code. When the underlying classical code is linear, the resulting quantum code is a stabilizer code. In this work, we investigate the conditions under which an arbitrary $[n,k,d]$ classical binary linear code can be lifted to an $[[n,k,d]]$ quantum stabilizer code using the CWS framework. We analyze the random CWS code constructed from a given $[n,k,d]$ binary linear code $\mathcal{C}$ and an Erd\H{o}s-R\'enyi random graph $\mathrm{G}(n, \frac12)$. We establish a lower bound on the probability that the resulting quantum code preserves the minimum distance $d$ of the classical code. This probabilistic analysis yields a Gilbert-Varshamov-type condition on the $n,k,d$ parameters of the classical code that guarantees the existence of a graph resulting in a quantum code with the same $n,k,d$ parameters. 
\end{abstract}

\section{Introduction}
Quantum computing provides exponential speedups for certain computational problems by leveraging quantum mechanical phenomena such as superposition and entanglement \cite{Nielsen_Chuang_2010, Shor1994, Shor_1997}. However, quantum states are fragile and susceptible to decoherence, and quantum operations are often faulty. Quantum error correcting codes (QECC) are necessary to address these problems and realize reliable fault-tolerant quantum computation by encoding quantum information into a larger entangled subspace \cite{PhysRevA.52.R2493, Knill97, knill1999theory}.

Quantum stabilizer codes are the most well-studied class of QECCs \cite{gottesman1997stabilizer}. By defining a quantum code as the simultaneous $+1$ eigenspace of an abelian subgroup of the Pauli group, the stabilizer formalism presents a discrete algebraic framework for quantum error correction. The Calderbank-Shor-Steane code, which falls in the class of quantum stabilizer codes, provided a method to construct quantum codes from classical codes \cite{CalderbankShorCSS, SteaneCSS}. This construction uses two classical codes, with the dual of one code contained in the other.

Another approach to derive QECCs from classical codes is via the codeword stabilized (CWS) codes framework \cite{CWSCode, CWSCode2, CWS_GR-chapter-in-LB}. Inputs to a CWS construction are two mathematical objects: a graph $\mathrm G$, which defines a graph  \cite{CWS_PhysRevA.65.012308, oneWayQC}, and a classical code, not necessarily linear, which specifies a set of Pauli $Z$ operators that define a basis for the quantum code. When the classical code used is strictly non-linear, the quantum code obtained is not a stabilizer code, and hence CWS codes are a generalization of stabilizer codes \cite{CWS_Grassl1, CWS_grassl2}. When the classical code is linear, the CWS codes are stabilizer codes. For the purpose of this work, we will be considering CWS codes constructed from classical binary linear codes.

Kapshikar and Kundu used the CWS framework to show that computing the minimum distance of a quantum stabilizer code, exactly or approximately, is NP-hard \cite{kapshikarKundu}. Given an $[n,k,d]$ binary linear code $\mathcal{C}$, they provide an explicit construction of a graph $\mathrm G$ such that $\mathcal{C}$ and $\mathrm{G}$ give rise to an $[[\mathcal O(n^2), k, d]]$ quantum code. Recently, Kapshikar extended this to show that approximating the quantum code distance to a linear additive gap is NP-hard under randomized reductions, overcoming previous $\mathcal{O}(\sqrt{N})$ barriers \cite{kapshikar2026hardnessapproximatingquantumcode}.

The question we address in this study is the following: given a classical binary linear code $\mathcal{C}$ with parameters $[n,k,d]$, under what conditions can we find a graph $\mathrm{G}$ such that the resulting CWS code $\mathcal{Q} = \operatorname{CWS}(\mathrm{G}, \mathcal{C})$ is an $[[n,k,d]]$ quantum stabilizer code? Constructing such a graph deterministically is highly non-trivial, as the graph must satisfy a set of algebraic constraints to ensure that all low-weight Pauli errors either fall outside the classical code or commute with a set of Pauli $Z$ operators, derived from $\mathcal{C}$.

In our work, given an $[n,k,d]$ binary linear code $\mathcal{C}$, we consider the random CWS code obtained from $\mathcal{C}$ and an Erd\H{o}s-R\'{e}nyi model random graph $\mathrm G$ \cite{RandomGraphs_erdos, RandomGraph_Gilbert}. We give a lower bound on the probability that the resulting CWS code has minimum distance $d$. From this, we obtain a Gilbert-Varshamov (GV) style bound on the parameters of the classical code $\cC$ that guarantees the existence of a graph $\rmG$ that yields a quantum code with the same $n,k,d$ parameters as $\cC$. Furthermore, we show that for a family of classical codes whose asymptotic rate $R$ and relative distance $\delta$ satisfy $R + h_2(\delta) + \delta \log_2(3) < 1$, the random CWS codes obtained from Erdős-Rényi random graphs have rate $R$ and relative distance $\delta$ with a probability approaching 1 as $n \to \infty$.

We establish the notations used and the mathematical preliminaries in Section~\ref{Notations}. This is followed by the main contributions of this work in Section~\ref{sec:main}.

\section{Notation and Preliminaries} \label{Notations}
\subsection{Notation}
Let $\mathbb{F}_2$ denote the finite field of order 2, and let $\mathbb{C}$ denote the field of complex numbers. The $m \times m$ identity matrix over $\mathbb{F}_2$ is denoted by $\mathbf{I}_m$. For binary vectors $\mathbf{a} = (a_1,a_2,\ldots,a_n)$ and $\mathbf{b} = (b_1,b_2,\ldots,b_n)$, their dot product over $\F_2$ is defined as $\mathbf{a} \cdot \mathbf{b} = \sum_{i=1}^n a_ib_i \mod 2$. For any positive integer $n$, we define $[n] := \{1, 2, \ldots, n\}$. We use $\mathbb{P}(\cdot)$ to denote the probability of an event.

\subsection{Binary Linear Codes}
A classical binary linear code $\mathcal{C}$ is a $k$-dimensional subspace of $\mathbb{F}_2^n$, where $k$ is referred to as the dimension and $n$ as the blocklength of the code. The elements of $\mathcal{C}$ are called codewords, and the minimum distance $d$ of $\mathcal{C}$ is defined as the minimum Hamming weight of a non-zero codeword in $\mathcal{C}$. A code with minimum distance $d$ can detect up to $d-1$ errors and correct up to $\lfloor \frac{d-1}{2} \rfloor$ errors. A binary linear code of block length $n$, dimension $k$, and minimum distance $d$ is denoted as an $[n, k, d]$ code.

An $[n, k, d]$ code $\mathcal{C}$ can be specified by a $k \times n$ generator matrix $\mathbf{G}$, whose rows form a basis for $\mathcal{C}$. Alternatively, $\mathcal{C}$ can be expressed as the null space of an $(n-k) \times n$ parity-check matrix $\mathbf{H}$. Up to a permutation of the coordinates, a generator matrix can be written in the systematic form $\mathbf{G} = [\mathbf{I}_k \mid \mathbf{P}]$, in which case the corresponding parity-check matrix is given by $\mathbf{H} = [\mathbf{P}^T \mid \mathbf{I}_{n-k}]$, where $\mathbf{P} \in \mathbb{F}_2^{k \times (n-k)}$.

\subsection{Quantum Codes}
An $n$-qubit quantum code $\mathcal{Q}$ is a $K$-dimensional subspace of the Hilbert space $\mathbb{C}^{2^n}$. The codes of interest in this paper fall within the category of quantum stabilizer codes \cite{gottesman1997stabilizer}, whose dimension takes the form $K=2^k$, where $k$ is referred to as the logical dimension of $\mathcal{Q}$. Vectors $\ket{\psi} \in \mathcal{Q}$ are referred to as code states.

The stabilizer formalism is built upon Pauli operators. The single-qubit Pauli matrices are defined as
\begin{equation*}
    X = \begin{bmatrix}
     0 & 1 \\
     1 & 0
    \end{bmatrix}, Y = \begin{bmatrix}
        0 & -i \\
        i & 0
    \end{bmatrix}, Z = \begin{bmatrix}
        1 & 0 \\
        0 & -1
    \end{bmatrix}, \text{and} I = \begin{bmatrix}
        1 & 0 \\
        0 & 1
    \end{bmatrix}.
\end{equation*}
Each of these operators squares to the identity operator $I$. Furthermore, they satisfy the relation $Y=iXZ$, and the $X$ and $Z$ operators anti-commute: $XZ=-ZX$. For an $n$-qubit system with qubits indexed from $1$ to $n$, a single-qubit Pauli operator $\sigma \in \{I, X, Y, Z\}$ acting on the $j$-th qubit is denoted by $\sigma_j$.

One can take the $n$-fold tensor product of single-qubit Pauli operators and construct the $n$-qubit Pauli group as $\mathcal{P}_n := \{i^\ell \sigma_1 \otimes \sigma_2\otimes\cdots\otimes\sigma_n: \sigma_j \in \{X,Y,Z,I\} \text{ and }\ell \in \{0,1,2,3\}\}$, with multiplication as the group operation.

Up to a global phase factor, any operator $M = i^{\ell} \sigma_1 \otimes \sigma_2 \otimes \cdots \otimes \sigma_n \in \mathcal{P}_n$ can be mapped to a symplectic binary vector $[\mathbf{a}, \mathbf{b}] \in \mathbb{F}_2^{2n}$, with $\mathbf{a}, \mathbf{b} \in \mathbb{F}_2^n$, where
\begin{equation*}
    (a_j,b_j) = \begin{cases}
        (0,0) & \text{if } \sigma_j = I, \\
        (1,0) & \text{if } \sigma_j = X, \\
        (0,1) & \text{if } \sigma_j = Z, \\
        (1,1) & \text{if } \sigma_j = Y.
    \end{cases}
\end{equation*}
Consequently, we can write $M = i^\ell X(\mathbf{a})Z(\mathbf{b})$, where $X(\mathbf{a}) = X_1^{a_1} X_2^{a_2} \cdots X_n^{a_n}$ and $Z(\mathbf{b}) = Z_1^{b_1} Z_2^{b_2} \cdots Z_n^{b_n}$.

The symplectic weight of such a vector, defined as
\begin{equation*}
    w_{sym}([\mathbf{a}, \mathbf{b}]) := \big| \{j \in [n] \mid (a_j, b_j) \neq (0,0)\} \big|,
\end{equation*}
corresponds to the number of non-identity single-qubit Pauli operators in $M$.

Under the symplectic representation, two Pauli operators $M_1$ and $M_2$ with symplectic binary vectors $[\mathbf{a}^{(1)}, \mathbf{b}^{(1)}]$ and $[\mathbf{a}^{(2)}, \mathbf{b}^{(2)}]$, respectively, commute if and only if their symplectic inner product $\mathbf{a}^{(1)}\cdot \mathbf{b}^{(2)} + \mathbf{b}^{(1)}\cdot \mathbf{a}^{(2)} = 0$. The multiplication of two operators (up to a global phase) is equivalent to the vector addition of their symplectic representations over $\mathbb{F}_2$.

\begin{definition}[Stabilizer Codes]
    A stabilizer group on $n$ qubits is an abelian subgroup $S \subset \mathcal{P}_n$ that does not contain the element $-I$. The stabilizer code defined by $S$ is the subspace of $\mathbb{C}^{2^n}$ that is the simultaneous $+1$ eigenspace of all $s \in S$.
\end{definition}

A stabilizer group is uniquely determined by a set of independent generators. A stabilizer group with $n-k$ independent generators defines a stabilizer code with logical dimension $k$, which is denoted as an $[[n,k]]$ quantum stabilizer code.

In the symplectic representation, an $[[n,k]]$ stabilizer code can be described by $n-k$ linearly independent vectors in $\mathbb{F}_2^{2n}$ corresponding to the independent stabilizer generators. Because the stabilizer generators commute, these vectors have a symplectic inner product of $0$ with each other. The $(n-k)$-dimensional subspace $\mathcal{S} \subset \mathbb{F}_2^{2n}$ spanned by these vectors is equivalent to the stabilizer group. Henceforth, we will refer to stabilizers and other subgroups of the Pauli group using their symplectic representations.

The normalizer of a stabilizer group, expressed in the symplectic representation, is defined as the subspace 
\begin{equation*}
    \mathcal{N}(\mathcal{S}) = \big\{ [\mathbf{u}, \mathbf{v}] \in \mathbb{F}_2^{2n} \mid \mathbf{u} \cdot \mathbf{b} + \mathbf{v} \cdot \mathbf{a} = 0 \quad \forall [\mathbf{a}, \mathbf{b}] \in \mathcal{S} \big\}.
\end{equation*}

For any Pauli operator $N$ whose symplectic representation belongs to $\mathcal{N}(\mathcal{S})$, we have $N\ket{\psi} \in \mathcal{Q}$ for all code states $\ket{\psi} \in \mathcal{Q}$. These operators form the logical Pauli operators of the code $\mathcal{Q}$. Note that $\mathcal{S} \subset \mathcal{N}(\mathcal{S})$, but the action of any element in $\mathcal{S}$ on a code state is trivial. An $[[n,k]]$ quantum stabilizer code with minimum distance $d$ is denoted as an $[[n,k,d]]$ code. 

The minimum distance $d$ of a stabilizer code is defined as the minimum symplectic weight of a vector in $\mathcal{N}(\mathcal{S}) \setminus \mathcal{S}$. Physically, this corresponds to the minimum weight of a Pauli error that maps a code state to another code state, leaving the error undetectable. If no non-trivial stabilizer element in $\mathcal{S}$ has symplectic weight less than the minimum distance $d$, the code is called a pure stabilizer code. 

This work focuses on codeword stabilized (CWS) codes \cite{CWSCode, CWSCode2}, a class of quantum codes that subsumes stabilizer codes. A CWS code on $n$ qubits is defined by a graph $\mathrm{G}$ with $n$ vertices and a classical error correcting code $\mathcal{C}$ of block length $n$. In general, the classical code need not be linear. However, in this paper, we restrict our attention to CWS codes defined by a classical binary linear code $\mathcal{C}$, which are equivalent to stabilizer codes.

Associated with every simple, undirected graph, there exists a unique (up to a global phase) $n$-qubit quantum state known as a graph state.

\begin{definition}[Graph State]
    Consider a simple, undirected graph $\mathrm{G}$ with $n$ vertices and adjacency matrix $\mathbf{B}$. For each vertex $j \in [n]$, define the operator $s_j := X_jZ(\mathbf{b}^j)$, where $\mathbf{b}^j$ is the $j$-th row of $\mathbf{B}$. The graph state $\ket{\mathrm{G}}$ associated with $\mathrm{G}$ is the $n$-qubit state stabilized by the stabilizer group generated by $\{s_j \mid j \in [n]\}$. 
\end{definition}

A CWS code is obtained from a graph state $\ket{\mathrm{G}}$ by applying Pauli $Z$-operators derived from the codewords of a binary linear code $\mathcal{C}$.

\begin{definition}[CWS Code]
    Given a simple, undirected graph $\mathrm{G}$ with $n$ vertices and an $[n,k]$ binary linear code $\mathcal{C}$, the associated CWS code is defined as
    \begin{equation}\label{eq:CWSCodeSpace}
        \text{CWS}(\mathrm{G}, \mathcal{C}) = \operatorname{span} \big\{ Z(\mathbf{c})\ket{\mathrm{G}} \mid \mathbf{c} \in \mathcal{C} \big\}.
    \end{equation}
    Because $\mathcal{C}$ is a linear code, this construction yields an $[[n,k]]$ quantum stabilizer code.
\end{definition}

The following lemma describes the set of errors detectable by a CWS code.

\begin{lemma}[\cite{CWSCode}, Theorem 2]\label{Lem:CWS_Distance}
    Consider a Pauli error $E = X(\mathbf{a})Z(\mathbf{b})$. Let $\mathcal{Q}$ be a CWS code defined by a graph $\mathrm{G}$ with adjacency matrix $\mathbf{B}$ and a binary linear code $\mathcal{C}$. Define\footnote{We follow the $\operatorname{Cl}_{\mathrm{G}}(E)$ notation as introduced in \cite{CWSCode}, where ``Cl'' is supposed to indicate the effective ``classical'' equivalent of the Pauli error $E$, and the subscript indicates its dependence on the underlying graph $\mathrm{G}$.} $\operatorname{Cl}_{\mathrm{G}}(E) := \mathbf{a}\mathbf{B} + \mathbf{b}$. The code $\mathcal{Q}$ detects the error $E$ if and only if the following conditions are satisfied:
    \begin{enumerate}
        \item If $\operatorname{Cl}_{\mathrm{G}}(E) \neq \mathbf{0}$, then $\operatorname{Cl}_{\mathrm{G}}(E) \notin \mathcal{C}$.
        \item If $\operatorname{Cl}_{\mathrm{G}}(E) = \mathbf{0}$, then $E Z(\mathbf{c}) = Z(\mathbf{c}) E \quad \forall \mathbf{c} \in \mathcal{C}$.
    \end{enumerate}
\end{lemma}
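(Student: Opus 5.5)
The plan is to work directly with the stabilizer description of $\mathcal{Q}=\operatorname{CWS}(\mathrm{G},\mathcal{C})$ and apply the Knill--Laflamme condition: $E$ is detectable iff $\bra{\psi_i}E\ket{\psi_j}=\lambda_E\delta_{ij}$ for an orthonormal basis of $\mathcal{Q}$.

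\textbf{Basis.} The states $\ket{\mathbf{c}}:=Z(\mathbf{c})\ket{\mathrm{G}}$, $\mathbf{c}\in\mathcal{C}$, are orthonormal. To see this, note that $Z(\mathbf{c})$ anticommutes with $s_j$ exactly when $c_j=1$. Hence $Z(\mathbf{c})\ket{\mathrm{G}}$ is a joint eigenvector of the $s_j$ with eigenvalues $(-1)^{c_j}$. Distinct $\mathbf{c}$ give distinct eigenvalue patterns, so the states are orthogonal.

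\textbf{Reducing $E$ to a $Z$-string.} Since $s_j\ket{\mathrm{G}}=\ket{\mathrm{G}}$, we have $X(\mathbf{a})\ket{\mathrm{G}}=\pm Z(\mathbf{a}\mathbf{B})\ket{\mathrm{G}}$. This follows because $\prod_j s_j^{a_j}=\pm X(\mathbf{a})Z(\mathbf{a}\mathbf{B})$. Commuting $E$ past $Z(\mathbf{c})$ only introduces a sign, so
\[
E\ket{\mathbf{c}}=\pm Z(\mathbf{c})E\ket{\mathrm{G}}=\pm Z(\mathbf{c})\,Z(\operatorname{Cl}_{\mathrm{G}}(E))\ket{\mathrm{G}}=\pm\ket{\mathbf{c}+\operatorname{Cl}_{\mathrm{G}}(E)}.
\]
Track the sign carefully: $E\ket{\mathbf{c}}=(-1)^{\mathbf{a}\cdot\mathbf{c}}\,\epsilon_E\,\ket{\mathbf{c}+\operatorname{Cl}_{\mathrm{G}}(E)}$. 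Here $\epsilon_E$ is a phase depending only on $E$, and $(-1)^{\mathbf{a}\cdot\mathbf{c}}$ records whether $E$ commutes with $Z(\mathbf{c})$.

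\textbf{Case analysis.} The matrix element is $\bra{\mathbf{c}'}E\ket{\mathbf{c}}=\pm\epsilon_E\,\langle \mathbf{c}'|\mathbf{c}+\operatorname{Cl}_{\mathrm{G}}(E)\rangle$.

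If $\operatorname{Cl}_{\mathrm{G}}(E)\neq\mathbf{0}$, all diagonal elements vanish. By orthogonality, an off-diagonal element is nonzero iff $\mathbf{c}'=\mathbf{c}+\operatorname{Cl}_{\mathrm{G}}(E)$ for some codewords. Since $\mathcal{C}$ is linear, this happens iff $\operatorname{Cl}_{\mathrm{G}}(E)\in\mathcal{C}$. So Knill--Laflamme (with $\lambda_E=0$) holds iff condition 1 holds.

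If $\operatorname{Cl}_{\mathrm{G}}(E)=\mathbf{0}$, the matrix is diagonal with entries $\epsilon_E(-1)^{\mathbf{a}\cdot\mathbf{c}}$. It is proportional to the identity iff $\mathbf{a}\cdot\mathbf{c}=0$ for all $\mathbf{c}\in\mathcal{C}$. This is exactly the statement that $E$ commutes with every $Z(\mathbf{c})$, which is condition 2.

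\textbf{Main obstacle.} The only delicate step is the sign bookkeeping in the identity $X(\mathbf{a})\ket{\mathrm{G}}=\pm Z(\mathbf{a}\mathbf{B})\ket{\mathrm{G}}$. One must confirm that the sign is independent of $\mathbf{c}$. I would handle this by writing $E\,Z(\mathbf{c})=(-1)^{\mathbf{a}\cdot\mathbf{c}}Z(\mathbf{c})\,E$ first. Then the entire $\mathbf{c}$-dependence is isolated in that factor, and the remaining phase $\epsilon_E$ is defined by $E\ket{\mathrm{G}}=\epsilon_E Z(\operatorname{Cl}_{\mathrm{G}}(E))\ket{\mathrm{G}}$.
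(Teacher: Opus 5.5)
Your argument is correct. The paper gives no proof of this lemma and imports it from \cite{CWSCode}, and your Knill--Laflamme matrix-element computation, built on $E\ket{\mathrm{G}}=\pm Z(\operatorname{Cl}_{\mathrm{G}}(E))\ket{\mathrm{G}}$ and the orthogonality of the states $Z(\mathbf{v})\ket{\mathrm{G}}$, is essentially the argument of that source. Two small points are worth stating explicitly: you need that orthogonality for every $\mathbf{v}\in\mathbb{F}_2^n$, since $\mathbf{c}+\operatorname{Cl}_{\mathrm{G}}(E)$ need not lie in $\mathcal{C}$ (your eigenvalue-pattern argument already covers this); and in the case $\operatorname{Cl}_{\mathrm{G}}(E)=\mathbf{0}$ you use $\mathbf{0}\in\mathcal{C}$ to pass from ``$\mathbf{a}\cdot\mathbf{c}$ constant on $\mathcal{C}$'' to ``$\mathbf{a}\cdot\mathbf{c}=0$ for all $\mathbf{c}\in\mathcal{C}$''.
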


\section{Main Results}\label{sec:main}
In this work, we address the following problem: given a classical $[n,k,d]$ binary linear code $\mathcal{C}$, is it possible to construct an $[[n,k,d]]$ quantum stabilizer code within the CWS framework using $\mathcal{C}$? This problem reduces to determining whether there exists an adjacency matrix $\mathbf{B}$ such that the conditions in Lemma~\ref{Lem:CWS_Distance} are satisfied for all Pauli errors $E \in \mathcal{P}_n$ having symplectic weight strictly less than $d$. We approach this by analyzing the probability that a graph $\mathrm{G}$, sampled from an Erd\H{o}s-R\'enyi random graph model \cite{RandomGraphs_erdos, RandomGraph_Gilbert}, satisfies these conditions.

\begin{definition}[Erd\H{o}s-R\'enyi Random Graph]\label{Def:Erdos-Renyi}
    For a positive integer $n$ and $0 \leq p \leq 1$, an Erd\H{o}s-R\'enyi random graph $\mathrm{G}(n,p)$ is a simple, undirected graph on $n$ vertices where each pair of distinct vertices is connected by an edge independently with probability $p$.
\end{definition}

The adjacency matrix of a simple, undirected graph on $n$ vertices is a symmetric $n \times n$ binary matrix with zero diagonal. In other words, it is an $n \times n$ matrix $A = [A_{i,j}]$ with $A_{i,j} = A_{j,i} \in \F_2$ for all $i,j$, and $A_{i,i} = 0$ for all $i$. Define $\cA_n$ to be set of all such matrices. The next lemma records an important property of $\cA_n$ 

\begin{lemma}\label{lem:a_perp}
Let $\a \in \F_2^n$ be a fixed non-zero vector, and define $\a^\perp := \big\{ \b \in \F_2^n \mid \b \cdot \a = 0 \big\}$. Then, $\{\a A: A \in \cA_n\} = \a^\perp$.
\end{lemma}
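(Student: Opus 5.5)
We will prove the two inclusions separately. Throughout, vectors are row vectors and $\a A$ is the row vector whose $j$-th entry is $\sum_i a_i A_{i,j}$.

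\emph{Inclusion $\{\a A: A\in\cA_n\}\subseteq\a^\perp$.} Fix $A \in \cA_n$ and compute the dot product directly:
\begin{equation*}
(\a A)\cdot\a=\sum_{i,j} a_i A_{i,j} a_j .
\end{equation*}
By symmetry of $A$, the terms with $i\neq j$ pair up as $a_iA_{i,j}a_j + a_jA_{j,i}a_i = 2a_iA_{i,j}a_j = 0$ over $\F_2$. The diagonal terms vanish because $A_{i,i}=0$. Hence $\a A\in\a^\perp$. This step is routine, and it is the only place where the zero-diagonal condition is used.

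\emph{Inclusion $\a^\perp\subseteq\{\a A: A\in\cA_n\}$.} The map $A\mapsto \a A$ is $\F_2$-linear on the vector space $\cA_n$, so its image is a subspace of $\a^\perp$. Since $\a\neq\mathbf 0$, the subspace $\a^\perp$ has dimension $n-1$. It therefore suffices to exhibit $n-1$ linearly independent vectors in the image.

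Fix an index $i$ with $a_i=1$. For each $j\neq i$, let $E^{(i,j)}\in\cA_n$ be the adjacency matrix of the single edge $\{i,j\}$. Then
\begin{equation*}
\a E^{(i,j)} = a_i\mathbf e_j + a_j\mathbf e_i = \mathbf e_j + a_j\mathbf e_i ,
\end{equation*}
where $\mathbf e_\ell$ denotes the $\ell$-th standard basis vector. These $n-1$ vectors are linearly independent: the $j$-th one is the only one with a nonzero entry in coordinate $j$, for each $j\neq i$. Together with the first inclusion, this gives equality.

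The main point needing care is this explicit spanning step, in particular choosing the pivot $i$ with $a_i=1$ so that each single-edge matrix contributes a fresh coordinate. The rest of the argument is a dimension count.
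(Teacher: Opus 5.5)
Your proof is correct and is essentially the paper's argument. The paper's explicit preimage of $\b\in\a^\perp$ is precisely $\sum_{j\neq i} b_j E^{(i,j)}$, the star at the same pivot $i$ with edge labels $b_j$, which it verifies coordinatewise (using $\a\cdot\b=0$ for the $i$-th entry), whereas you replace that check with a dimension count on the same single-edge matrices; this is equally valid, and the paper's version merely has the small advantage of exhibiting the preimage explicitly.
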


\begin{proof}
Write $\a = (a_1,a_2,\ldots,a_n)$. For a matrix $A = [A_{i,j}] \in \cA_n$, let $\u = \a A$. We then have
    \begin{equation*}
        \begin{split}
            \mathbf{a} \cdot \mathbf{u} &= \sum_{j = 1}^{n} a_j \left( \sum_{i = 1}^{n} a_i A_{i,j} \right) \\
            &= \sum_{i=1}^n \sum_{j=1}^n a_i a_j A_{i,j} \\
            &= \sum_{i = 1}^{n} a_i^2 A_{i,i} + \sum_{i < j} a_i a_j A_{i,j} + \sum_{i > j} a_i a_j A_{i,j}.
        \end{split}
    \end{equation*}
    Since $A \in \cA_n$, we have $A_{i,i} = 0$ and $A_{i,j} = A_{j,i}$. Consequently, the first term vanishes, and the last two summations are identical and cancel in $\mathbb{F}_2$, giving $\mathbf{a} \cdot \mathbf{u} = 0$. Hence, $\{\a A: A \in \cA_n\} \subseteq \a^\perp$.

    On the other hand, consider any $\b = (b_1,b_2,\ldots,b_n) \in \a^\perp$. Fix an $i$ such that $a_i = 1$; such an $i$ exists since $\a$ is non-zero. Define a matrix $A \in \cA_n$ as follows: for the fixed choice of $i$, set $A_{i,j} = A_{j,i} = b_j$ for all $j \ne i$, and $A_{i,i} = 0$. Further, set $A_{k,\ell} = 0$ for all $k,\ell \ne i$. 
    
    Consider the $\ell$-th entry of $\a A$. When $\ell \ne i$, we have
    $$
    {(\a A)}_{\ell} = \sum_{k} a_k A_{k,\ell} = a_i A_{i,\ell} = b_{\ell},
    $$
    since $A_{k,\ell} = 0$ for all $k \ne i$, $A_{i,\ell} = b_{\ell}$, and $a_i = 1$. When $\ell = i$, we have
    $${(\a A)}_i = \sum_{k \ne i} a_k A_{k,i} = \sum_{k \ne i} a_k b_k = a_i b_i + \a \cdot \b = b_i,$$
    since $\b \in \a^{\perp}$. Thus, $\a A = \b$, from which we conclude that $\{\a A: A \in \cA_n\} = \a^\perp$.
\end{proof}

The lemma below is the key to our analysis of the random CWS code obtained using an Erd\H{o}s-R\'{e}nyi random graph. 

\begin{lemma}\label{lem:uA_distribution}
    Let $\mathbf{a} \in \mathbb{F}_2^n$ be a fixed non-zero vector, and let $\mathbf{A}_n$ be the adjacency matrix of an Erd\H{o}s-R\'enyi random graph $\mathrm{G}(n, \frac12)$. Then, the random vector $\mathbf{U} := \mathbf{a}\mathbf{A}_n$ is uniformly distributed over $\mathbf{a}^\perp := \big\{ \mathbf{b} \in \mathbb{F}_2^n \mid \mathbf{b} \cdot \mathbf{a} = 0 \big\}$.
\end{lemma}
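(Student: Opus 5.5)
The plan is to observe that for $p=\frac12$ the adjacency matrix $\mathbf{A}_n$ is uniformly distributed over $\cA_n$. Its $\binom{n}{2}$ upper-triangular entries are independent fair bits, and they determine the matrix completely, so every $A\in\cA_n$ occurs with probability $2^{-\binom{n}{2}}$. The map $\phi_\a:\cA_n\to\F_2^n$, $A\mapsto \a A$, is $\F_2$-linear, since $\cA_n$ is closed under addition and is an $\F_2$-vector space of dimension $\binom{n}{2}$. By Lemma~\ref{lem:a_perp}, the image of $\phi_\a$ is exactly $\a^\perp$.

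The key step is the standard fact that a linear map pushes the uniform distribution on its domain forward to the uniform distribution on its image. For any $\b\in\a^\perp$, the preimage $\phi_\a^{-1}(\b)$ is nonempty by Lemma~\ref{lem:a_perp}. It is therefore a coset $A_0+\ker\phi_\a$ for any $A_0$ with $\a A_0=\b$, and so it has cardinality $|\ker\phi_\a|$, independent of $\b$. By rank–nullity, $|\ker\phi_\a| = 2^{\binom{n}{2}}/|\a^\perp| = 2^{\binom{n}{2}-(n-1)}$, using that $\a^\perp$ is a hyperplane because $\a\neq\mathbf{0}$.

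Consequently, for each $\b\in\a^\perp$,
\[
\mathbb{P}(\U=\b) = |\phi_\a^{-1}(\b)|\cdot 2^{-\binom{n}{2}} = 2^{-(n-1)} = \frac{1}{|\a^\perp|},
\]
and $\mathbb{P}(\U=\b)=0$ for $\b\notin\a^\perp$. This is exactly uniformity on $\a^\perp$.

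There is no serious obstacle here. The only points needing care are justifying that $G(n,\frac12)$ induces the uniform measure on $\cA_n$, and noting that surjectivity onto $\a^\perp$ comes from Lemma~\ref{lem:a_perp}, which makes all fibers equal-sized cosets.
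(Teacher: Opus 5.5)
Your proof is correct and follows the paper's route: $\mathbf{A}_n$ is uniform on $\cA_n$, the map $A \mapsto \a A$ is linear, and Lemma~\ref{lem:a_perp} identifies its image as $\a^\perp$. You also supply the fiber-counting argument (fibers are cosets of the kernel, sized via rank--nullity) that the paper leaves implicit when it says the image of a uniform distribution under a linear map is uniform.
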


\begin{proof}
    Observe that the adjacency matrix, $\mathbf{A}_n$, of the Erd\H{o}s-R\'enyi random graph $\mathrm{G}(n,\frac12)$ is uniformly distributed over $\cA_n$.
    The mapping $A \mapsto \a A$ defines a linear map $\varphi: \cA_n \to \F_2^n$, the image of which (by Lemma~\ref{lem:a_perp}) is $\a^\perp$. Therefore, the random vector $\mathbf{U} = \varphi(\mathbf{A}_n) = \mathbf{a}\mathbf{A}_n$ is uniformly distributed over the image of the linear map $\varphi$, i.e., $\U \sim \text{Unif}(\a^{\perp})$.
\end{proof}

As the main contribution of this paper, we analyze the probability of obtaining a CWS code that achieves the same minimum distance as its underlying classical code.

\begin{theorem}\label{mainTheorem}
    Let $\mathcal{C}$ be an $[n,k,d]$ classical binary linear code. The probability that $\mathcal{Q} = \operatorname{CWS}\left(\mathcal{C}, \mathrm{G}\left(n, \frac{1}{2}\right)\right)$ is a pure quantum code having minimum distance $d$ is at least 
    \begin{equation*}
        1 - 2^{k-n} \sum_{j=1}^{d-1} \binom{n}{j}(3^j-1).
    \end{equation*}
\end{theorem}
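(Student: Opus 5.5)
The plan is a union bound over all Pauli errors of symplectic weight between $1$ and $d-1$. For each one we compute the probability that it violates the ``pure distance'' condition. Concretely, I will show that if, for every nonzero $E = X(\a)Z(\b)$ with $1 \le w_{sym}([\a,\b]) \le d-1$, we have $\operatorname{Cl}_{\rmG}(E) = \a\mathbf{A}_n + \b \notin \cC$ (in particular $\operatorname{Cl}_{\rmG}(E)\neq \mathbf{0}$), then $\mathcal{Q}$ is pure with distance at least $d$. Call such an $E$ \emph{bad} if $\a\mathbf{A}_n+\b\in\cC$.

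The sufficiency step uses Lemma~\ref{Lem:CWS_Distance}. If $\operatorname{Cl}_{\rmG}(E)\notin\cC$ for all such $E$, then condition 1 of the lemma holds and condition 2 is vacuous, so every error of weight $<d$ is detected. For purity I also need that no nontrivial stabilizer has weight $<d$. The stabilizer of $\operatorname{CWS}(\rmG,\cC)$ consists of the graph-state stabilizer elements $X(\a)Z(\a\mathbf{B})$ that commute with all $Z(\mathbf{c})$, and these have $\operatorname{Cl}_{\rmG}=\mathbf{0}\in\cC$. So the event ``$\operatorname{Cl}_{\rmG}(E)\notin\cC$ for all low-weight nonzero $E$'' excludes low-weight stabilizers too. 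The distance equals $d$ exactly because $Z(\mathbf{c})$ for a minimum-weight codeword $\mathbf{c}$ is a logical operator: it lies in the normalizer and not in $\mathcal{S}$, since $Z(\mathbf{c})\ket{\rmG}$ is a distinct basis state.

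Next I bound $\mathbb{P}(E \text{ bad})$ by splitting on $\a$.

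\textbf{Case $\a=\mathbf{0}$.} Here $\operatorname{Cl}_{\rmG}(E)=\b$ with $1\le \mathrm{wt}(\b)\le d-1$, and $\b\notin\cC$ deterministically. These are the $\sum_j\binom{n}{j}$ pure-$Z$ errors, accounting for the ``$-1$'' in $3^j-1$.

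\textbf{Case $\a\neq\mathbf{0}$.} By Lemma~\ref{lem:uA_distribution}, $\U=\a\mathbf{A}_n$ is uniform on $\a^\perp$, a set of size $2^{n-1}$. Hence $\a\mathbf{A}_n+\b$ is uniform on the coset $\a^\perp+\b$, and
\[
\mathbb{P}(E\text{ bad}) = \frac{|\cC\cap(\a^\perp+\b)|}{2^{n-1}}.
\]
Since $\cC\cap(\a^\perp+\b)$ is either empty or a coset of $\cC\cap\a^\perp$, it has size at most $2^k$ in general. If it is a proper coset of a subspace it has size at most $2^{k-1}$, which gives $\mathbb{P}\le 2^{k-n}$. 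If instead $\cC\subseteq\a^\perp$ and $\b\in\a^\perp$, the crude count only gives $2^{k-n+1}$. This factor of two is the main obstacle.

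To resolve it, I would pair errors. The error $E'$ with the same $\a$ and $\b'=\b+\a$ (restricted so the support is unchanged) has the same weight. Failing that, I would observe that $E$ and $E\cdot X(\a)Z(\a\mathbf{A}_n)$-type pairs give the same event. An alternative I would try first is to compare with the joint count: summing over all $\b$ supported on $\mathrm{supp}(\a)\cup\mathrm{supp}(\b)$ of fixed support $S$ with $|S|=j$, the errors with $\a\ne\mathbf{0}$ number $3^j-2^j$. I would aim to show that the number of bad events per support is at most $3^j-1$ in expectation times $2^{k-n}$, by noting that among $\b$ and $\b+\mathbf{e}$ pairs only half lie in $\a^\perp$.

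Finally, summing gives
\[
\mathbb{E}[\#\text{bad}]\le 2^{k-n}\sum_{j=1}^{d-1}\binom{n}{j}(3^j-1),
\]
and Markov's inequality / the union bound yields the claim.
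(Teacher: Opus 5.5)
\textbf{Gap: the factor-of-two obstacle is never actually resolved.} Your reduction is sound and matches the paper: the sufficiency and purity argument via Lemma~\ref{Lem:CWS_Distance}, the deterministic treatment of pure-$Z$ errors, the formula $\mathbb{P}(E\text{ bad})=|\cC\cap(\mathbf{a}^\perp+\mathbf{b})|/2^{n-1}$, and your diagnosis of the trouble case. That case is $\mathbf{a}\in\cC^\perp$ with $\mathbf{a}\cdot\mathbf{b}=0$, where the failure probability is $2^{k-n+1}$. But the step that removes this factor of two is the actual content of the theorem, and none of your three suggestions carries it out.

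\emph{First suggestion.} The pairing $\mathbf{b}'=\mathbf{b}+\mathbf{a}$ does preserve the support, since it swaps $X\leftrightarrow Y$ on all of $\mathrm{supp}(\mathbf{a})$. However, $\mathbf{a}\cdot\mathbf{b}'=\mathbf{a}\cdot\mathbf{b}+\mathrm{wt}(\mathbf{a}) \bmod 2$. So for even-weight $\mathbf{a}$ the two partners lie in the same coset of $\mathbf{a}^\perp$. Concretely, if $d\ge 3$ and $\mathbf{a}=\mathbf{e}_1+\mathbf{e}_2\in\cC^\perp$, the pair $X_1X_2$, $Y_1Y_2$ has both failure probabilities equal to $2^{k-n+1}$. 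This pairing therefore does not give the needed average.

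\emph{Second suggestion.} The operator $E\cdot X(\mathbf{a})Z(\mathbf{a}\mathbf{A}_n)=[\mathbf{0},\mathbf{b}+\mathbf{a}\mathbf{A}_n]$ has the same $\operatorname{Cl}_{\rmG}$, hence trivially the same event. But it depends on the random graph and need not have weight below $d$. So it is not a pairing of the error set and halves nothing.

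\emph{Third suggestion.} This one is too vague to verify, and its count is off. The number of Paulis with support exactly $S$, $|S|=j$, and $\mathbf{a}\neq\mathbf{0}$ is $3^j-1$, not $3^j-2^j$. As a result, your final bound on $\mathbb{E}[\#\text{bad}]$ is asserted rather than proved.

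\textbf{The missing idea: flip a single $Z$-bit inside $\mathrm{supp}(\mathbf{a})$, not all of them.} Let $i^*$ be the smallest index with $a_{i^*}=1$, and pair $[\mathbf{a},\mathbf{b}]$ with $[\mathbf{a},\mathbf{b}+\mathbf{e}_{i^*}]$. This swaps $X_{i^*}\leftrightarrow Y_{i^*}$, so it is a fixed-point-free, weight-preserving involution on the errors with $\mathbf{a}\neq\mathbf{0}$. Because $\mathbf{a}\cdot\mathbf{e}_{i^*}=1$, exactly one of $\mathbf{b}$, $\mathbf{b}+\mathbf{e}_{i^*}$ lies in $\mathbf{a}^\perp$.
\begin{itemize}
\item If $\mathbf{a}\in\cC^\perp$, the pair's failure probabilities are $2^{k-n+1}$ and $0$.
\item If $\mathbf{a}\notin\cC^\perp$, each is exactly $2^{k-n}$, since $\cC$ then meets each coset of $\mathbf{a}^\perp$ in $2^{k-1}$ points.
\end{itemize}
Either way a pair contributes $2^{k-n+1}$, and the union bound gives the theorem.

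Your closing intuition, that exactly one of $\mathbf{b}$, $\mathbf{b}+\mathbf{e}$ lies in $\mathbf{a}^\perp$, is correct precisely when $\mathbf{e}$ is supported in $\mathrm{supp}(\mathbf{a})$ and $\mathbf{a}\cdot\mathbf{e}=1$. That is the condition you needed to state and use. The choice $\mathbf{e}=\mathbf{a}$ satisfies it only when $\mathrm{wt}(\mathbf{a})$ is odd.
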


\begin{proof}
    For any Pauli $Z$-type error $E$ with symplectic binary representation $[\mathbf{0},\mathbf{b}]$, we have $\operatorname{Cl}_{\mathrm{G}}(E) = \mathbf{b}$. Because the classical code $\mathcal{C}$ has minimum distance $d$, all such non-trivial errors of symplectic weight less than $d$ satisfy $\mathbf{b} \notin \mathcal{C}$ and are therefore detectable by Lemma~\ref{Lem:CWS_Distance}. Moreover, for any minimum-weight codeword $\mathbf{c}^* \in \mathcal{C}$, the Pauli error $Z(\mathbf{c}^*)$ yields $\operatorname{Cl}_{\mathrm{G}}(Z(\mathbf{c}^*)) = \mathbf{c}^* \in \mathcal{C}$, meaning it is not detectable. Hence, the quantum minimum distance of $\mathcal{Q}$ is at most $d$. Therefore, it suffices to show that all Pauli errors of symplectic weight strictly less than $d$ that contain at least one $X$ or $Y$ component are detectable by $\mathcal{Q}$.
    
    Let $\mathcal{E}$ be the set of all Pauli errors $E = [\mathbf{a}, \mathbf{b}]$ such that the symplectic weight is less than $d$ and $\mathbf{a} \neq \mathbf{0}$. The number of such errors is 
    \begin{equation*}
        |\mathcal{E}| = \sum_{j=1}^{d-1} \binom{n}{j}(3^j-1).
    \end{equation*}
    
    For a fixed $E = [\mathbf{a}, \mathbf{b}] \in \mathcal{E}$, consider the random vector $\mathbf{U} = \mathbf{a}\mathbf{A}_n$, where $\mathbf{A}_n$ is the adjacency matrix of the random graph $\mathrm{G}(n, \frac12)$. From Lemma~\ref{lem:uA_distribution}, we know that $\mathbf{U}$ is uniformly distributed over the subspace $\a^\perp := \{\mathbf{x} \in \mathbb{F}_2^n \mid \mathbf{x} \cdot \mathbf{a} = 0\}$. To ensure that $E$ is detectable, it is sufficient to require that $\operatorname{Cl}_{\mathrm{G}}(E) \notin \mathcal{C}$.

    From Eq.~\eqref{eq:CWSCodeSpace}, it is clear that the stabilizer group of the CWS code is a subgroup of the stabilizer group of the graph state. Therefore, any stabilizer of the code must be of the form $S = [\mathbf{u}, \mathbf{u}\mathbf{A}_n]$ for some $\mathbf{u} \in \mathbb{F}_2^n$, and hence $\operatorname{Cl}_{\mathrm{G}}(S) = \mathbf{0}$; in particular, $\operatorname{Cl}_{\mathrm{G}}(S) \in \mathcal{C}$. Therefore, if $\operatorname{Cl}_{\mathrm{G}}(E) \notin \mathcal{C}$ holds for all $E \in \mathcal{E}$, then no non-trivial stabilizer element of the code has a symplectic weight strictly less than $d$, meaning the code is pure.
    
    Let us define the failure probability for a specific error $E$ as $\mathbb{P}_f(E) := \mathbb{P}(\operatorname{Cl}_{\mathrm{G}}(E) \in \mathcal{C})$. Observe that the event $\{\mathbf{U} + \mathbf{b} \in \mathcal{C}\}$ is equivalent to $\{\mathbf{U} \in (\mathbf{b}+\mathcal{C})\}$, where $\mathbf{b}+\mathcal{C}$ is the coset obtained by translating $\mathcal{C}$ by $\mathbf{b}$. Thus,
    \begin{equation}\label{Eq:failureProbability}
        \mathbb{P}_f(E) = \mathbb{P}\big( \mathbf{U} \in (\mathbf{b}+\mathcal{C}) \big) = \frac{\lvert \a^{\perp} \cap (\mathbf{b}+\mathcal{C}) \rvert}{\lvert \a^{\perp} \rvert}.
    \end{equation}

    By the union bound, we have
    \begin{equation}\label{Eq:SucessProbUnionBound}
    \begin{split}
        \mathbb{P} \left( \bigcap_{E \in \mathcal{E}} \{\operatorname{Cl}_{\mathrm{G}}(E) \notin \mathcal{C}\} \right) &= 1 - \mathbb{P} \left( \bigcup_{E \in \mathcal{E}} \{\operatorname{Cl}_{\mathrm{G}}(E) \in \mathcal{C}\} \right)\\
        &\geq 1 - \sum_{E \in \mathcal{E}} \mathbb{P}_f(E).
        \end{split}
    \end{equation}

    To compute the sum $\sum_{E \in \mathcal{E}} \mathbb{P}_f(E)$, we partition the set $\mathcal{E}$ into disjoint pairs. For any $E = [\mathbf{a}, \mathbf{b}] \in \mathcal{E}$, since $\mathbf{a} \neq \mathbf{0}$, let $i^*$ be the smallest index in $[n]$ such that $a_{i^*} = 1$. We define its paired error $E' = [\mathbf{a}, \mathbf{b}']$ such that $b'_{i^*} = b_{i^*} + 1 \mod 2$, and $b'_i = b_i$ for all $i \neq i^*$. Because $a_{i^*} = 1$, flipping $b_{i^*}$ simply swaps an $X$ operator for a $Y$ operator (or vice versa) on the $i^*$-th qubit. This transformation preserves the symplectic weight, ensuring $E' \in \mathcal{E}$. Applying the same operation to $E'$ gives $E$, and therefore, this pairing partitions $\mathcal{E}$ into exactly $|\mathcal{E}|/2$ disjoint pairs $\{E, E'\}$. 
    
    Now, consider the sum of their failure probabilities, $\mathbb{P}_f(E) + \mathbb{P}_f(E')$, under two cases:
    \begin{enumerate}
        \item If $\mathbf{a} \notin \mathcal{C}^\perp$:
        It follows that $\mathcal{C} \not\subset \a^{\perp}$. In this case, exactly half of the elements in $\mathcal{C}$ fall within $\a^{\perp}$. The same argument holds for any coset, including $\mathbf{b}+\mathcal{C}$ and $\mathbf{b}'+\mathcal{C}$. Hence, from Eq.~\eqref{Eq:failureProbability}, we have
        \begin{equation*}
            \mathbb{P}_f(E) = \mathbb{P}_f(E') = \frac{2^{k-1}}{2^{n-1}} = 2^{k-n}.
        \end{equation*}
        Therefore,
        \begin{equation*}
            \mathbb{P}_f(E) + \mathbb{P}_f(E') = 2^{k-n+1}.
        \end{equation*}

        \item If $\mathbf{a} \in \mathcal{C}^\perp$: 
        The code $\mathcal{C}$ is contained within $\a^{\perp}$. It follows that for any codeword $\mathbf{c} \in \mathcal{C}$, the inner products $\mathbf{a} \cdot (\mathbf{b}+\mathbf{c}) = \mathbf{a} \cdot \mathbf{b}$, and $\mathbf{a} \cdot (\mathbf{b}'+\mathbf{c}) = \mathbf{a} \cdot \mathbf{b}' = \mathbf{a} \cdot \mathbf{b} + 1 \pmod 2$. Consequently, exactly one of the cosets, either $\mathbf{b}+\mathcal{C}$ or $\mathbf{b}'+\mathcal{C}$, is completely contained in $\a^{\perp}$, while the other is strictly disjoint from it. Therefore, one of the probabilities $\mathbb{P}_f(E)$ or $\mathbb{P}_f(E')$ equals $\frac{2^k}{2^{n-1}} = 2^{k-n+1}$, and the other equals $0$. Hence,
        \begin{equation*}
            \mathbb{P}_f(E) + \mathbb{P}_f(E') = 2^{k-n+1}.
        \end{equation*}
    \end{enumerate}

    In both cases, the sum of the failure probabilities for any pair $\{E, E'\}$ is exactly $2^{k-n+1}$. Substituting this into Eq.~\eqref{Eq:SucessProbUnionBound} yields:
    \begin{equation}\label{Eq:successProb}
        \begin{split}
            \mathbb{P} \left( \bigcap_{E \in \mathcal{E}} \{\operatorname{Cl}_{\mathrm{G}}(E) \notin \mathcal{C}\} \right) &\geq 1 - \frac{|\mathcal{E}|}{2} \left(2^{k-n+1}\right) \\
            &= 1 - 2^{k-n} \sum_{j=1}^{d-1} \binom{n}{j}(3^j-1).
        \end{split}
    \end{equation}
    Because $\operatorname{Cl}_{\mathrm{G}}(E) \notin \mathcal{C}$ for all $E \in \mathcal{E}$ is sufficient to guarantee that $\mathcal{Q}$ detects all errors of weight less than $d$, the probability that $\mathcal{Q}$ has minimum distance $d$ is bounded below by $1 - 2^{k-n} \sum_{j=1}^{d-1} \binom{n}{j}(3^j-1)$, concluding the proof.
\end{proof}

By requiring the lower bound on the success probability in Theorem~\ref{mainTheorem} to be strictly positive, we obtain the following corollary. 

\begin{corollary}\label{cor:CWS_existence}
    For every classical $[n,k,d]$ binary linear code $\mathcal{C}$ satisfying 
    \begin{equation}
        \sum_{j = 1}^{d-1} \binom{n}{j}(3^j-1) < 2^{n-k},
        \label{ineq:GV}
    \end{equation}
    there exists a graph $\mathrm{G}$ that yields an $[[n,k,d]]$ pure quantum stabilizer code $\mathcal{Q}$ via the CWS framework. 
\end{corollary}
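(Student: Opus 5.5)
The plan is the probabilistic method applied to Theorem~\ref{mainTheorem}. Let $\mathbf{A}_n$ be the adjacency matrix of the Erd\H{o}s-R\'enyi random graph $\mathrm{G}(n,\frac12)$. Theorem~\ref{mainTheorem} says that the event ``$\operatorname{CWS}(\mathcal{C},\mathrm{G})$ is a pure quantum code with minimum distance $d$'' has probability at least
\begin{equation*}
1 - 2^{k-n}\sum_{j=1}^{d-1}\binom{n}{j}(3^j-1).
\end{equation*}
Multiplying hypothesis \eqref{ineq:GV} by $2^{k-n}>0$ makes the subtracted term strictly less than $1$, so this lower bound is strictly positive. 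An event with positive probability in a finite probability space (the uniform distribution on $\cA_n$) is nonempty. Hence some adjacency matrix $A\in\cA_n$, i.e.\ some simple undirected graph $\mathrm{G}$ on $n$ vertices, realizes it.

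Next I would verify the parameters of $\mathcal{Q}=\operatorname{CWS}(\mathrm{G},\mathcal{C})$ for this fixed $\mathrm{G}$. The block length is $n$ qubits by construction. The logical dimension is $k$, since the CWS definition with linear $\mathcal{C}$ of dimension $k$ gives an $[[n,k]]$ stabilizer code. To see this concretely, note that the states $Z(\mathbf{c})\ket{\mathrm{G}}$ for distinct $\mathbf{c}\in\mathcal{C}$ are mutually orthogonal: they are eigenvectors of the graph-state generators $s_j$ with distinct eigenvalue patterns $(-1)^{c_j}$. They therefore span a space of dimension $2^k$.

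For the distance, the proof of Theorem~\ref{mainTheorem} gives two facts for this $\mathrm{G}$. First, $Z(\mathbf{c}^*)$ for a minimum-weight codeword is undetectable, so the distance is at most $d$. Second, the event guarantees every error of weight below $d$ is detected, so the distance is at least $d$, and purity holds.

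There is no real obstacle here. The only point needing care is that Theorem~\ref{mainTheorem} is a statement about a random graph, while the corollary asks for a single graph. The nonemptiness argument above bridges this gap, because $\cA_n$ is finite and the distribution has full support.
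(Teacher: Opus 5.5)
Your proposal is correct and follows the paper's argument: the paper obtains this corollary by noting that \eqref{ineq:GV} makes the lower bound of Theorem~\ref{mainTheorem} strictly positive, so some graph must realize the event. Your extra checks of the logical dimension (via the distinct eigenvalue patterns $(-1)^{c_j}$ of $Z(\mathbf{c})\ket{\mathrm{G}}$ under the generators $s_j$), of the distance being exactly $d$, and of purity are sound elaborations of what the paper leaves implicit.
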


An asymptotic analysis of this result yields a Gilbert-Varshamov-style bound, given by the following corollary. 

\begin{corollary}
    Let $\mathcal{C}$ be a classical binary linear code of blocklength $n$, rate $R = k/n$, and relative distance $\delta = d/n \le 3/4$, satisfying the condition $R + h_2(\delta) + \delta\log_2(3) < 1$. The probability that the random $[[n,nR]]$ quantum stabilizer code $\mathcal{Q} = \operatorname{CWS}(\mathcal{C}, \mathrm{G}(n, \frac12))$ is a pure code with distance $d = \delta n$ is bounded below by $1- 2^{-n[1-R-h_2(\delta)-\delta\log_2 3]}$. In particular, this probability approaches $1$ as $n \to \infty$.
    \label{cor:2}
\end{corollary}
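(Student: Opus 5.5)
The plan is to start from the non-asymptotic bound of Theorem~\ref{mainTheorem} and show that its failure term is at most $2^{-n[1-R-h_2(\delta)-\delta\log_2 3]}$. Theorem~\ref{mainTheorem} says that the probability that $\mathcal{Q}=\operatorname{CWS}(\mathcal{C},\mathrm{G}(n,\frac12))$ is pure with distance $d$ is at least $1-2^{k-n}\sum_{j=1}^{d-1}\binom{n}{j}(3^j-1)$. Hence it suffices to prove
\begin{equation*}
\sum_{j=1}^{d-1}\binom{n}{j}(3^j-1)\le 2^{n[h_2(\delta)+\delta\log_2 3]}.
\end{equation*}
Multiplying by $2^{k-n}=2^{-n(1-R)}$ then gives exactly the claimed failure probability.

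I would bound the sum in two steps.
\begin{enumerate}
\item Drop the $-1$ and extend the range, using $3^j\le 3^{\delta n}$ for $j\le d-1<\delta n$:
\begin{equation*}
\sum_{j=1}^{d-1}\binom{n}{j}(3^j-1)\le \sum_{j=0}^{\lfloor \delta n\rfloor}\binom{n}{j}3^j .
\end{equation*}
\item Bound the resulting sum by a weighted-binomial Chernoff-type estimate. For any $0<t\le 1$,
\begin{equation*}
\sum_{j\le \delta n}\binom{n}{j}3^j\le t^{-\delta n}\sum_{j=0}^{n}\binom{n}{j}(3t)^j=t^{-\delta n}(1+3t)^n .
\end{equation*}
Choose $t=\frac{\delta}{3(1-\delta)}$, which satisfies $t\le 1$ exactly when $\delta\le 3/4$. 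This is where the hypothesis $\delta\le 3/4$ enters.
\end{enumerate}

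With this choice, $1+3t=\frac{1}{1-\delta}$, so the right-hand side becomes
\begin{equation*}
\left(\frac{3(1-\delta)}{\delta}\right)^{\delta n}(1-\delta)^{-n}=3^{\delta n}\,\delta^{-\delta n}(1-\delta)^{-(1-\delta)n}=2^{n[\delta\log_2 3+h_2(\delta)]}.
\end{equation*}
This is the $q$-ary volume bound for $q=4$, which one could also simply cite.

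For the limit statement, the hypothesis $R+h_2(\delta)+\delta\log_2 3<1$ makes the exponent $1-R-h_2(\delta)-\delta\log_2 3$ a fixed positive constant. The failure bound therefore tends to $0$ and the success probability tends to $1$ as $n\to\infty$.

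I expect the main obstacle to be step 2, specifically checking that the exponential-tilting choice of $t$ is valid, i.e.\ $t\le 1$, so that the terms with $j\le\delta n$ are correctly dominated. For $\delta>3/4$ this fails and the sum is dominated by its middle terms, which is why the condition $\delta\le 3/4$ is needed. The remaining issues are minor: the boundary case $\delta=0$ is trivial, and $d$ and $\delta n$ are treated as integers as in the statement.
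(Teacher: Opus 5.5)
Your proposal is correct and follows the paper's route: you bound the failure term of Theorem~\ref{mainTheorem} by $2^{k-n}\sum_{j\le \delta n}\binom{n}{j}3^j$ and then apply the $4$-ary Hamming-ball volume bound $4^{n h_4(\delta)} = 2^{n[h_2(\delta)+\delta\log_2 3]}$. The only difference is that you prove that volume bound inline by exponential tilting with $t=\delta/(3(1-\delta))$, which makes explicit where $\delta\le 3/4$ is needed, whereas the paper cites it from Roth.
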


\begin{proof}
    Let $d_{\mathcal{Q}}$ denote the minimum distance of the quantum code $\mathcal{Q}$. From Theorem~\ref{mainTheorem}, the probability that $d_{\mathcal{Q}}$ falls strictly below $d$ is bounded by:
    \begin{equation}\label{eq:asymptotic1}
        \begin{split}
            \mathbb{P}(d_\mathcal{Q} < d) &\leq 2^{k-n} \sum_{j=1}^{d-1} \binom{n}{j}(3^j - 1) \\
            &< 2^{k-n} \sum_{j=0}^{d} \binom{n}{j} 3^j.
        \end{split}
    \end{equation}
    We use the following standard upper bound on the volume of a $q$-ary Hamming sphere (see e.g., \cite[Lemma~4.7]{Roth2006})
    \begin{equation*}
        \sum_{j=0}^{t} \binom{n}{j} (q-1)^j \leq q^{n h_q(t/n)}, \quad \text{for }0 \le t/n \le 1-\frac{1}{q},
    \end{equation*}
    where $h_q(x) = x \log_q(q-1) - x \log_q x - (1-x) \log_q(1-x)$ is the $q$-ary entropy function. Applying this bound to Eq.~\eqref{eq:asymptotic1} with $q=4$ and $t=d = \delta n$, we obtain:
    \begin{equation*}
        \begin{split}
            \mathbb{P}(d_\mathcal{Q} < d) &< 2^{k-n} \cdot 4^{n h_4(\delta)} \\
            &= 2^{k-n} \cdot 2^{n \left(h_2(\delta) + \delta \log_2 3\right)} \\
            &= 2^{n \left(R + h_2(\delta) + \delta \log_2 3 - 1\right)},
        \end{split}
    \end{equation*}
    where we utilized the identity $2h_4(\delta) = h_2(\delta) + \delta\log_2 3$. 
%
\end{proof}

From the two corollaries above, we can deduce the following quantum GV bound.

\begin{corollary}
\mbox{ }
    \begin{itemize}
        \item[(a)] For any choice of parameters $n,k,d$ satisfying \eqref{ineq:GV}, there exists an $[[n,k,\ge d]]$ pure quantum stabilizer code.
        \item[(b)] For blocklength $n$, rate $R$ and relative distance $\delta\leq 3/4$ that satisfy $R+h_2(\delta)+\delta\log_2 3 < 1$, there exists\footnote{We assume, to keep the statement simple, that $nR$ and $n\delta$ are integers.} an $[[n, nR, \ge \delta n]]$ pure quantum stabilizer code.
    \end{itemize}
    \label{cor:GV}
\end{corollary}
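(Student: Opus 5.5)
The plan is to derive both parts from Corollary~\ref{cor:CWS_existence} and Corollary~\ref{cor:2}. In each case the idea is to first produce a classical binary linear code with the required parameters, and then lift it to a quantum code using the random graph.

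For part (a), fix $n,k,d$ satisfying \eqref{ineq:GV}. We first need an $[n,k,\ge d]$ binary linear code $\mathcal{C}$. The classical Gilbert--Varshamov bound gives one as soon as $\sum_{j=0}^{d-2}\binom{n}{j} < 2^{n-k}$. This is implied by \eqref{ineq:GV}, since $\binom{n}{j}\le \binom{n}{j}(3^j-1)$ for $j\ge 1$ and the term $j=d-1$ absorbs the $j=0$ term (for $d\ge 2$; the case $d=1$ is trivial). Let $d'\ge d$ be the true minimum distance of $\mathcal{C}$. The main subtlety is that Corollary~\ref{cor:CWS_existence} is stated with the exact distance $d'$, and \eqref{ineq:GV} need not hold at $d'$. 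I will therefore avoid applying the corollary at $d'$ and instead rerun the proof of Theorem~\ref{mainTheorem} with $d$ in place of $d'$. Take $\mathcal{E}$ to be the set of errors with symplectic weight less than $d$ and $\mathbf{a}\neq\mathbf{0}$. The pairing argument again gives $\sum_{E\in\mathcal{E}}\mathbb{P}_f(E)=2^{k-n}|\mathcal{E}|<1$.

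It then remains to check that $Z$-type errors of weight less than $d\le d'$ are detected and do not lie in the stabilizer. This holds because their images $\operatorname{Cl}_{\mathrm{G}}$ are nonzero vectors of weight less than $d'$, hence not in $\mathcal{C}$. Consequently there exists a graph for which every nontrivial error of weight less than $d$ is detected and no nontrivial stabilizer has weight less than $d$. This gives a pure $[[n,k,\ge d]]$ code.

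For part (b), I would first show that $R+h_2(\delta)+\delta\log_2 3<1$ with $\delta\le 3/4$ implies \eqref{ineq:GV} for $k=nR$ and $d=\delta n$. This follows from the chain in the proof of Corollary~\ref{cor:2}: the left side of \eqref{ineq:GV} is less than $\sum_{j=0}^{d}\binom{n}{j}3^j\le 2^{n(h_2(\delta)+\delta\log_2 3)}<2^{n(1-R)}$. Part (b) is then a direct application of part (a).

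The only real obstacle is the mismatch between ``$\ge d$'' and the exact-distance hypothesis in the earlier corollaries. I will handle it by noting that the union-bound argument only ever uses two facts: that $\mathcal{C}$ has no nonzero codeword of weight less than $d$, and the count $|\mathcal{E}|$ over weights below $d$.
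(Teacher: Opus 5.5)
Your proof is correct. It differs from the paper's mainly in part (b) and in how carefully part (a) handles the ``$\ge d$'' issue.

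\textbf{Part (a).} You follow the paper's skeleton: reduce \eqref{ineq:GV} to a classical GV condition, take an $[n,k,\ge d]$ binary linear code, and lift it. The inequality step differs only in technique. The paper uses $\binom{n}{j}3^j\ge\binom{n+1}{j}$ and telescopes down to $\sum_{j=0}^{d-2}\binom{n}{j}$. Your termwise bound $3^j-1\ge 1$, with the $j=0$ term absorbed into the $j=d-1$ term, reaches the same sum more directly.

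The more substantive difference is the distance mismatch. The paper simply invokes Corollary~\ref{cor:CWS_existence}, which is stated for a code of exact distance $d$. But the GV code may have distance $d'>d$, and \eqref{ineq:GV} can fail at $d'$. You observe that the pairing and union-bound argument of Theorem~\ref{mainTheorem} uses only three facts: linearity, $\dim\mathcal{C}=k$, and the absence of nonzero codewords of weight $<d$. That closes a gap the paper's proof leaves implicit.

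\textbf{Part (b).} Here the routes genuinely differ. The paper invokes the classical asymptotic GV bound to obtain an $[n,nR,\ge\delta n]$ code and then applies Corollary~\ref{cor:2}, which carries the same exact-distance caveat. You instead use the $4$-ary ball estimate $\sum_{j\le d}\binom{n}{j}3^j\le 2^{n(h_2(\delta)+\delta\log_2 3)}$. This shows that the asymptotic hypothesis already implies \eqref{ineq:GV} at $(n,nR,\delta n)$, so (b) becomes a formal consequence of (a). Your route avoids a second classical existence theorem and inherits your careful treatment of the $\ge d$ issue.

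\textbf{A nuance on purity.} What you establish is purity relative to $d$: no nontrivial stabilizer of weight $<d$. That is the natural reading of ``$[[n,k,\ge d]]$ pure'', and the paper's argument does not give more. If one wants purity relative to the true quantum distance, first lower the classical distance to exactly $d$. Puncture at a coordinate in the support of a minimum-weight codeword and append a zero coordinate, repeating as needed. Dimension is preserved because $d'\ge 2$. Corollary~\ref{cor:CWS_existence} then applies verbatim.
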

\begin{proof}
    (a)~Note first that $3^j \ge 1+j \ge 1 + j/(n+1-j) = \binom{n+1}{j} / \binom{n}{j}$, and hence, $\binom{n}{j} 3^j \ge \binom{n+1}{j}$. With this, the left-hand-side of \eqref{ineq:GV} is lower bounded by
    $$
    \sum_{j=1}^{d-1} \left[\binom{n+1}{j} - \binom{n}{j}\right] = \sum_{j=1}^{d-1} \binom{n}{j-1} = \sum_{j=0}^{d-2}\binom{n}{j},
    $$
    which is in turn lower bounded by $\sum_{j=0}^{d-2}\binom{n-1}{j}$. So, if \eqref{ineq:GV} is satisfied, then so is the classical GV inequality $\sum_{j=0}^{d-2} \binom{n-1}{j} < 2^{n-k}$, which guarantees the existence of an $[n,k, \ge d]$ binary linear code. The existence of an $[[n,k,\ge d]]$ quantum stabilizer code now follows from Corollary~\ref{cor:CWS_existence}.

    (b)~If $n, R, \delta$ satisfy the condition in the statement, they also satisfy the weaker condition : $R < 1 - h_2(\delta)$. Therefore, by the classical asymptotic GV bound (see e.g., \cite[Theorem~4.10]{Roth2006}), an $[n,nR, \ge n\delta]$ binary linear code exists, and the statement follows from Corollary~\ref{cor:2}.
\end{proof}

Feng and Ma established the following GV bound for the existence of an $[[n,k,d]]$ pure quantum stabilizer code \cite{FengMa04}:
\begin{equation*}
    \sum_{j=1}^{d-1} \binom{n}{j} 3^j < 2^{n-k+2} - 1.
\end{equation*}
While their finite bound is less restrictive and guarantees existence for a broader set of parameters, our bound provides the additional structural guarantee that these pure quantum codes can be constructed directly from classical binary linear codes with the same parameters via the CWS framework. Furthermore, the asymptotic version of our bound is identical to their asymptotic bound for pure quantum codes, which also agrees with the asymptotic bound proposed for general quantum stabilizer codes by Ekert and Macchiavello \cite{ekert}.

\section{Conclusion}\label{sec:conclusion}
We investigated the construction of quantum stabilizer codes within the codeword stabilized (CWS) code framework, using classical binary linear codes and random graphs. Specifically, we addressed whether an arbitrary classical $[n,k,d]$ binary linear code can be lifted to an $[[n,k,d]]$ quantum code by choosing a suitable graph. By sampling the graph from an Erd\H{o}s-R\'enyi random graph model, we established a lower bound on the probability that the resulting CWS code preserves the minimum distance $d$ of the classical code. This probabilistic analysis yielded a GV-style bound on the code parameters that guarantees the existence of a graph $\mathrm{G}$ achieving this minimum distance.

Furthermore, we extended our analysis to the asymptotic regime. We showed that for any family of classical codes with rate $R$ and relative distance $\delta$ satisfying $R + h_2(\delta) + \delta\log_2 3 < 1$, random CWS codes constructed on Erd\H{o}s-R\'enyi random graphs form, with high probability, a family of quantum codes of rate $R$ and relative distance $\delta$.

A natural open question following this existence proof is how to explicitly and efficiently construct these graphs. As a starting point, one could investigate specific families of good classical codes to identify compatible, deterministic graph structures. We leave this as a direction for future work.

\section*{Statement of AI Usage}
During the preparation of this manuscript, the authors used Gemini 3.1 Pro and GPT-5.6 Luna to refine the presentation of the proofs and polish the writing. The core technical content, analysis, and conclusions were entirely generated and reviewed by the human authors, who take full responsibility for the content and all claims made in this manuscript.

\section*{Acknowledgment}
This work was supported by the Prime Minister's Research Fellowship (PMRF) awarded to T. Aswanth, and by the ANRF MATRICS grant ANRF/ARGM/2025/000814/MTR awarded to N.\ Kashyap.


%
\bibliographystyle{IEEEtran}
\bibliography{bibliofile}

%








\end{document}